\documentclass[conference]{IEEEtran}
\usepackage{amsmath,amsfonts,amsthm}
\usepackage{algorithmic}
\usepackage{algorithm}
\usepackage{array}
\usepackage{textcomp}
\usepackage{stfloats}
\usepackage{url}
\usepackage{color}
\usepackage{verbatim}
\usepackage{graphicx}
\usepackage{cite}
\newtheorem{theorem}{Theorem}
\usepackage{subcaption}
\usepackage{graphicx}
\newcommand{\cstpl}[1]{\left[\rule{0 cm}{#1}\right.}
\newcommand{\cstpr}[1]{\left.\rule{0 cm}{#1}\right]}
\usepackage{geometry}
\begin{document}

\title{Level Crossing Rate Analysis for Optimal Single-user RIS Systems}
\vspace{-5em}
\author{\IEEEauthorblockN{Amy S. Inwood\IEEEauthorrefmark{2}\IEEEauthorrefmark{3}, Peter J. Smith\IEEEauthorrefmark{1}, Philippa A. Martin\IEEEauthorrefmark{2}, Graeme K. Woodward\IEEEauthorrefmark{3}}

    \IEEEauthorblockA{\IEEEauthorrefmark{2}Department of Electrical and Computer Engineering, University of Canterbury, Christchurch, New Zealand}
    \IEEEauthorblockA{\IEEEauthorrefmark{3}Wireless Research Centre, University of Canterbury, Christchurch, New Zealand}
    \IEEEauthorblockA{\IEEEauthorrefmark{1}School of Mathematics and Statistics, Victoria University of Wellington, Wellington, New Zealand}
    email: amy.inwood@pg.canterbury.ac.nz, peter.smith@vuw.ac.nz, (philippa.martin, graeme.woodward)@canterbury.ac.nz}
\maketitle

\begin{abstract}
We analyse the level crossing rate (LCR) of an uplink single-user (SU) reconfigurable intelligent surface (RIS) aided system. It is assumed that the RIS to base station (RIS-BS) channel is deployed as line-of-sight (LoS), and the user (UE)-RIS and UE-BS channels are correlated Rayleigh. For the optimal RIS reflection matrix, we derive a novel and exact analytical LCR expression for when the direct (UE-BS) channel is blocked, i.e. the RIS-only channel. Also, the existing exact expression for the direct-only channel (equivalent to classical maximal-ratio-combining (MRC)) suffers from extreme numerical precision problems when the BS has many elements. Therefore, we propose a new stable and accurate approximation to the LCR of the direct channel. The approximation is based on replacing any small similar eigenvalues of the channel correlation matrix by their average.  We show that increasing the number of elements at the RIS or BS and decreasing channel correlation makes the LCR drop more rapidly for thresholds away from the mean SNR.  Crucially, we find that RIS systems do not significantly amplify temporal variations in the channel. This is particularly beneficial for RIS systems considering the difficulty in acquiring channel state information (CSI).
\end{abstract}

\begin{IEEEkeywords}
Reconfigurable intelligent surface (RIS), level crossing rate (LCR), Rayleigh fading.
\end{IEEEkeywords}

\section{Introduction}
Interest in reconfigurable intelligent surfaces (RIS) continues to grow, driven by their ability to manipulate the wireless channel with very little power consumption. RIS can enhance several aspects of link performance including signal-to-noise-ratio (SNR) and blockage avoidance \cite{wu_towards_2020}. A considerable body of work on RIS now exists, including design, testbeds, and performance analysis \cite{liu_reconfigurable_2021}. Standards development is also underway, with ETSI releasing its first report on RIS \cite{etsi_reconfigurable_2023}.

Typical RIS implementations involve many elements, $N$, in an array, where each element adjusts the phase of the reflected signal to enhance the channel between user (UE) and base-station (BS). Hence, the RIS changes the channel via $N$ phase shifts. Fundamental properties of the channel may also be changed, including the level crossing rate (LCR), which quantifies how often the fading crosses a threshold and is thus used to evaluate temporal changes \cite{ivanis_level_2008, beaulieu_level_2003}.

Work in this area is extremely limited, and \cite{simmons_simulation_2023} is the only LCR study of a RIS-based system. In \cite{simmons_simulation_2023}, a complex system is considered with multiple RIS operating in a cooperative manner. Due to the complexity of this system, the study is necessarily limited to simulation. Hence, we are motivated to evaluate the LCR of a more fundamental system: the single-user RIS system where the RIS to BS link is line-of-sight (LoS) and RIS element phase shifts maximise SNR \cite{nadeem_asymptotic_2020}. For this system, we derive the exact analytical expression for the LCR of the RIS-only channel (when the UE-BS link is blocked, indicated in orange in Fig. \ref{fig:LCR_Paper_System_Diagram}), a novel result. The SNR of the direct-only channel (when the RIS is blocked, indicated in green in Fig. \ref{fig:LCR_Paper_System_Diagram}) collapses to the classical case of maximal-ratio-combining (MRC). Previous work on MRC \cite{ivanis_level_2008} is numerically unstable for large arrays. As a result, another important contribution of this work is providing a novel, accurate and numerically stable approximation to the LCR of MRC in correlated Rayleigh channels with large numbers of antennas. We believe the exact analytical expression of the more complex case, when both RIS and direct links are operational, is intractable. As even accurate approximations are mathematically challenging, this scenario is simulated with analysis left as future work.

Analytical results are verified by simulation, and the effects of varying spatial correlation, system size and channel gain are investigated. All results show that the  RIS channel is similar in nature to the direct channel, and thus RIS operation does not amplify any temporal changes. This is particularly important for RIS as channel estimation is difficult \cite{basharat_reconfigurable_2021}, meaning rapid changes in the channel would be problematic.

\textit{Notation}: Upper and lower boldface letters represent matrices and vectors, respectively. $\mathbf{v}_k$ is the $k$-th element of $\mathbf{v}$, $\mathbf{v}_{i,k}$ is the $k$-th element of $\mathbf{v}_{i}$ and $\mathbf{M}_{r,s}$ is the $(r,s)$-th element of $\mathbf{M}$. $\mathbb{E}[\cdot]$ represents statistical expectation. $\dot{x}=\frac{d}{dt}x(t)$ refers to the first derivative of $x$ with respect to time. $\mathbb{C}$ is the set of complex numbers. $\mathcal{CN}(\boldsymbol\mu,\mathbf{Q})$ represents a complex Gaussian distribution with mean $\boldsymbol\mu$ and covariance matrix $\mathbf{Q}$. $\chi_k^2$ is a chi-squared distribution with $k$ degrees of freedom. $\mathrm{Exp}(\mu)$ is an exponential distribution with a mean of $\mu$. ${}_1F_1(a,b;z)$ is the confluent hypergeometric function and ${}_2F_1(a,b,c;z)$ is the Gaussian hypergeometric function. $\Gamma(\cdot)$ is the complete gamma function. $K(\cdot)$ and $E(\cdot)$ are the complete elliptic integrals of the first and second kind and $J_0(\cdot)$ is the zeroth order Bessel function of the first kind. $(\cdot)^T$ and $(\cdot)^\dagger$ represent the transpose and Hermitian transpose operators. The angle of a vector, $\mathbf{x}$, of length $N$ is denoted $\angle\mathbf{x}=[\angle{x}_1,\dots,\angle{x}_N]^T$ and the exponent $e^{\mathbf{x}}=[e^{{x}_1},\dots,e^{{x}_N}]^T$. $f_X$ is the probability density function (PDF) of a random variable $X$ and $f_{X,Y}$ is the joint PDF of random variables $X$ and $Y$. $\Phi_X$ is the characteristic function (CF) of  $X$ and $\Phi_{X,Y}$ is the joint CF of  $X$ and $Y$. $\mathrm{Var}(X)$ is the variance of $X$.

\section{System Model}\label{sec:sysmodel}
We consider the uplink RIS-aided system in Fig. \ref{fig:LCR_Paper_System_Diagram}, where a RIS panel with $N$ elements is located near a BS with $M$ antennas. One single-antenna user is located near to both. 
\begin{figure}[ht!]
    \centering
    \includegraphics[scale=0.55]{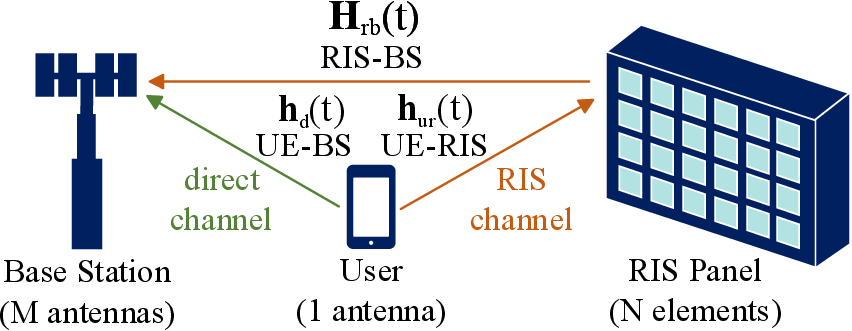}
    \caption{System model showing the uplink channels at time $t$.}
    \label{fig:LCR_Paper_System_Diagram}
\end{figure}

Let $\mathbf{h}_{\mathrm{d}}(t) \in \mathbb{C}^{M\times 1}$, $\mathbf{h}_{\mathrm{ur}}(t) \in \mathbb{C}^{N\times 1}$ be the direct and UE-RIS channels respectively at time instant $t$. $\mathbf{H}_{\mathrm{rb}} \in \mathbb{C}^{M\times N}$ is the static LoS RIS-BS channel. We consider the correlated Rayleigh channels $\mathbf{h}_{\mathrm{d}}(t) = \sqrt{\beta_{\mathrm{d}}}\mathbf{R}_{\mathrm{d}}^{1/2}\mathbf{u}_{\mathrm{d}}$ and $\mathbf{h}_{\mathrm{ur}}(t) = \sqrt{\beta_{\mathrm{ur}}}\mathbf{R}_{\mathrm{ur}}^{1/2}\mathbf{u}_{\mathrm{ur}}$, and the rank-1 LoS channel $\mathbf{H}_{\mathrm{rb}} = \sqrt{\beta_{\mathrm{rb}}}\mathbf{a}_\mathrm{b}\mathbf{a}_\mathrm{r}^\dagger$. $\beta_{\mathrm{d}}$, $\beta_\mathrm{rb}$ and $\beta_{\mathrm{ur}}$ are the channel gains, $\mathbf{a}_\mathrm{b}$ and $\mathbf{a}_\mathrm{r}$ are steering vectors for the LoS ray at the BS and RIS, respectively. $\mathbf{u}_{\mathrm{d}}, \mathbf{u}_{\mathrm{ur}} \sim \mathcal{CN}(\mathbf{0,I})$, and $\mathbf{R}_{\mathrm{d}}$ and $\mathbf{R}_{\mathrm{ur}}$ are spatial correlation matrices for the UE-BS and UE-RIS links, respectively. Correlation is a key feature for RIS systems where a large number of antennas are packed into a comparatively small space. The spatio-temporal correlation structure is assumed to be separable as in \cite{smith_impact_2004}, so that
\begin{align}
    \mathbb{E}\left[\mathbf{h}_{\mathrm{ur},k}(t)\mathbf{h}^*_{\mathrm{ur},l}(t+\tau)\right] & = \beta_\mathrm{ur}\mathbf{R}_{\mathrm{ur},k,l}\rho_\mathrm{ur}(\tau), \\
    \mathbb{E}\left[\mathbf{h}_{\mathrm{d},k}(t)\mathbf{h}^*_{\mathrm{d},l}(t+\tau)\right] & = \beta_\mathrm{d}\mathbf{R}_{\mathrm{d},k,l}\rho_\mathrm{d}(\tau).
\end{align}
In numeric results, we use the classical temporal correlation models $\rho_\mathrm{d}(\tau) = J_0(2\pi f_\mathrm{d}\tau)$ for the direct link and $\rho_\mathrm{ur}(\tau) = J_0(2\pi f_\mathrm{ur}\tau)$ for the RIS-UE link, where $f_\mathrm{d}$ and $f_\mathrm{ur}$ are the Doppler frequencies of the UE with respect to the BS and RIS, respectively, and $\tau$ is the time difference. 

$\mathbf{\Phi}(t)\in \mathbb{C}^{N\times N}$ is a diagonal matrix of RIS element reflection coefficients selected to optimise the total SNR at the BS receiver. The optimal design is given in \cite{singh_optimal_2021} as
\begin{equation}
	\label{eq:Phi}
	\mathbf{\Phi}(t) = \nu_k(t) \, \mathrm{diag}\left(e^{j\angle \mathbf{a}_{\mathrm{r}}}\right)\mathrm{diag}\left(e^{-j\angle \mathbf{h}_{\mathrm{ur}}(t)}\right),
\end{equation}
and $\nu_k(t)\!=\!\mathrm{e}^{j\angle{\mathbf{a}_\mathrm{b}^\dagger\mathbf{h}_{\mathrm{d}}(t)}}\!$. Thus, the received signal at the BS is
\begin{equation}
    \mathbf{r}(t) = (\mathbf{h}_{\mathrm{d}}(t)+\mathbf{H}_{\mathrm{rb}}\mathbf{\Phi}(t) \mathbf{h}_{\mathrm{ur}}(t))s(t) + \mathbf{n}(t), \label{eq:channel}
\end{equation}
and from \cite{singh_optimal_2021}, the SNR is 
\begin{align}
    \label{eq:SNRfull}
    \mathrm{SNR}(t)=&\frac{E_s}{\sigma^2}\Big[\mathbf{h}_{\mathrm{d}}^\dagger(t)\mathbf{h}_{\mathrm{d}}(t)+2\sqrt{\beta_{\mathrm{rb}}}Y(t)|\mathbf{a}_\mathrm{b}^\dagger\mathbf{h}_{\mathrm{d}}(t)| \notag\\ &\qquad +M\beta_\mathrm{rb}Y(t)^2\Big],
\end{align}
where $Y(t) = \sum_{k=1}^{N}|\mathbf{h}_{\mathrm{ur},k}(t)|$, $E_s=\mathbb{E}[|s(t)|^2]$ is the transmitted signal energy and $\sigma^2=\mathbb{E}[|\mathbf{n}(t)|^2]$ is the noise variance.

\section{Analysis}
The level crossing rate is the rate at which the SNR crosses a threshold, $T$, and is defined in \cite{rice_statistical_1948} as
\begin{equation}
    \mathrm{LCR}(T) = \int\displaylimits_{0}^\infty\dot{x}f_{\mathrm{SNR},\mathrm{\dot{SNR}}}(T,\dot{x})d\dot{x}.
\end{equation}
\subsection{LCR of the RIS Channel}\label{sec:RIS}
The SNR for the RIS link (UE - RIS - BS) with no direct path between the TX and RX (setting $\beta_\mathrm{d}=0$  in \eqref{eq:SNRfull}) is
\begin{equation}
    \mathrm{SNR}_\mathrm{R}(t) = \frac{E_s M\beta_\mathrm{rb}}{\sigma^2}Y^2(t) = cY^2(t), \label{eq:RISonlySNR}
\end{equation}
where $c={E_s M\beta_\mathrm{rb}}/{\sigma^2}$. Here, we provide the exact LCR of the RIS (UE-RIS-BS) channel.
\begin{theorem}
    The LCR of the SNR variable, $\mathrm{SNR}_\mathrm{R}(t) =  cY^2(t)$, across a threshold $T$ is given by
    \begin{align}
        \mathrm{LCR_R}(T) &= \sqrt{\frac{2}{\pi}\,c\,T\omega^2} f_{\mathrm{SNR_R}}(T),
    \end{align}
    where $\omega^2$ will be given in (\ref{eq:w2}), $Y(t) = \sum_{k=1}^{N}|\mathbf{h}_{\mathrm{ur},k}(t)|$, $\mathbf{h}_{\mathrm{ur}}(t) \sim \mathcal{CN}(\mathbf{0},\beta_{\mathrm{ur}}\mathbf{R}_{\mathrm{ur}})$ and the elements of $\mathbf{h}_{\mathrm{ur}}(t)$ have temporal correlation  $\mathbb{E}[\mathbf{h}_{\mathrm{ur,k}}(t)\mathbf{h}^*_{\mathrm{ur,k}}(t+\tau)]=J_0(2\pi f_\mathrm{ur}\tau)$.
\end{theorem}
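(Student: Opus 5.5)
The plan is the standard Rice argument: show that, at a fixed time, the derivative $\dot Y$ is Gaussian with zero mean and known variance $\omega^2$, and independent of $Y$. Then transfer the result from $Y$ to $\mathrm{SNR}_\mathrm{R}=cY^2$.

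\emph{Step 1: each term.} Write $\mathbf{h}_{\mathrm{ur},k}(t)=x_k(t)+jy_k(t)$, so $r_k=|\mathbf{h}_{\mathrm{ur},k}|$ and
\begin{equation*}
\dot r_k=\frac{x_k\dot x_k+y_k\dot y_k}{r_k}.
\end{equation*}
The temporal correlation is $J_0(2\pi f_\mathrm{ur}\tau)$, which is even and differentiable. Hence each process $(x_k,y_k)$ is jointly Gaussian with its derivative, and at equal times $\dot x_k,\dot y_k$ are uncorrelated with $x_l,y_l$ for all $k,l$. This holds because under the separable model
\begin{equation*}
\mathbb{E}[\mathbf{h}_{\mathrm{ur},k}\dot{\mathbf{h}}^*_{\mathrm{ur},l}]=\beta_\mathrm{ur}\mathbf{R}_{\mathrm{ur},k,l}\,\rho_\mathrm{ur}'(0)=0,
\end{equation*}
and the same argument applies to the pseudo-covariances.

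\emph{Step 2: conditional law of $\dot Y$.} The derivatives $\dot{\mathbf h}_{\mathrm{ur}}$ are independent of $\mathbf h_{\mathrm{ur}}$ and have covariance
\begin{equation*}
\mathbb{E}[\dot{\mathbf{h}}_{\mathrm{ur},k}\dot{\mathbf{h}}^*_{\mathrm{ur},l}]=-\beta_\mathrm{ur}\mathbf{R}_{\mathrm{ur},k,l}\,\rho_\mathrm{ur}''(0)=2\pi^2f_\mathrm{ur}^2\beta_\mathrm{ur}\mathbf{R}_{\mathrm{ur},k,l}.
\end{equation*}
Therefore, conditional on $\mathbf h_{\mathrm{ur}}$, the quantity $\dot Y=\sum_k \mathrm{Re}\!\left(e^{-j\angle \mathbf h_{\mathrm{ur},k}}\dot{\mathbf h}_{\mathrm{ur},k}\right)$ is a zero-mean Gaussian. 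Its conditional variance is
\begin{equation*}
\pi^2 f_\mathrm{ur}^2\beta_\mathrm{ur}\sum_{k,l}\mathrm{Re}\!\left(\mathbf R_{\mathrm{ur},k,l}\,e^{-j(\theta_k-\theta_l)}\right),
\end{equation*}
which depends only on the phases $\theta_k$. The theorem states a single constant $\omega^2$, so I expect (\ref{eq:w2}) to be a phase-averaged version of this variance. The average would use $\mathbb{E}[\cos(\theta_k-\theta_l)]$, which involves the elliptic integrals $K,E$ listed in the notation.

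The main obstacle is justifying that replacement. The phases $\theta_k$ are correlated with the magnitudes $r_k$, and hence with $Y$, so the conditional variance is not exactly independent of $Y$. In the uncorrelated case $\mathbf R_{\mathrm{ur}}=\mathbf I$ there is no difficulty: the variance collapses to $\omega^2=N\pi^2f_\mathrm{ur}^2\beta_\mathrm{ur}$, a constant. For general $\mathbf R_{\mathrm{ur}}$ I would define $\omega^2$ as the unconditional variance $\mathrm{Var}(\dot Y)$, computed from $\mathbb{E}[\cos(\theta_k-\theta_l)]$ for bivariate complex Gaussians. I would then argue, or adopt as the paper's model, that $\dot Y\sim\mathcal N(0,\omega^2)$ independently of $Y$.

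\emph{Step 3: Rice formula and change of variables.} With $\dot Y\sim\mathcal N(0,\omega^2)$ independent of $Y$, we have
\begin{equation*}
\mathrm{LCR}_Y(y)=\mathbb{E}[\dot Y^+]\,f_Y(y)=\frac{\omega}{\sqrt{2\pi}}\,f_Y(y).
\end{equation*}
Since $cY^2$ is monotone in $Y\ge0$, level crossings are preserved, so $\mathrm{LCR_R}(T)=\mathrm{LCR}_Y(\sqrt{T/c})$. Using the density transformation $f_{\mathrm{SNR_R}}(T)=f_Y(\sqrt{T/c})\big/\big(2\sqrt{cT}\big)$, we get
\begin{equation*}
\mathrm{LCR_R}(T)=\frac{\omega}{\sqrt{2\pi}}\,2\sqrt{cT}\,f_{\mathrm{SNR_R}}(T)=\sqrt{\frac{2}{\pi}\,c\,T\omega^2}\,f_{\mathrm{SNR_R}}(T),
\end{equation*}
which is the stated expression.
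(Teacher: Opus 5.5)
Your proposal is essentially the paper's proof. The paper likewise takes $\dot Y\sim\mathcal N(0,\omega^2)$ independent of $Y$, asserting this by citing \cite{beaulieu_level_2003} rather than proving it, and in Appendix B it obtains $\omega^2=\sum_{k,l}\mathbb E[\dot r_k\dot r_l]$ by the same phase average of $\pi^2 f_\mathrm{ur}^2\beta_\mathrm{ur}\Re(\mathbf R_{\mathrm{ur},kl}e^{-j(\theta_k-\theta_l)})$ that you describe. It then applies Rice's formula to $\dot{\mathrm{SNR}}_\mathrm{R}=2\sqrt{c\,\mathrm{SNR}_\mathrm{R}}\,\dot Y$ conditioned on $\mathrm{SNR}_\mathrm{R}$, which is equivalent to your monotone change of variables from $\mathrm{LCR}_Y$.

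Your caveat is well founded, and the step can only be adopted, not argued. The exact Rice formula gives $\mathbb E[\sqrt{V}\mid Y=y]f_Y(y)/\sqrt{2\pi}$, where $V=\pi^2f_\mathrm{ur}^2\beta_\mathrm{ur}\mathbf v^\dagger\mathbf R_\mathrm{ur}\mathbf v$ and $\mathbf v=e^{j\angle\mathbf h_\mathrm{ur}(t)}$. By Jensen's inequality this matches $\omega f_Y(y)/\sqrt{2\pi}$ for all $y$ only if $V$ is almost surely constant, e.g.\ $\mathbf R_\mathrm{ur}=\mathbf I$ or full correlation. So for intermediate correlation, both your argument and the paper's give an approximation rather than an exact result.
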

\begin{proof}
    From (\ref{eq:RISonlySNR}), it can be seen that
    \begin{equation}\label{snrdot}
        \dot{\mathrm{SNR}}_\mathrm{R}(t) = 2\,c\,Y(t)\dot{Y}(t) = 2\sqrt{c\,\,\mathrm{SNR}_\mathrm{R}(t)}\dot{Y}(t).
    \end{equation}
    From \cite{beaulieu_level_2003}, $\dot{Y}(t)$ is Gaussian and independent of $Y(t)$. Hence, $\dot{Y}(t)\sim\mathcal{N}(0,\omega^2)$, where, as derived in Appendix B,
    \begin{align}
        \omega^2 &= \sum_{k=1}^N\sum_{l=1}^N\mathbb{E}\left[|\dot{\mathbf{h}}_{\mathrm{ur},k}(t)||\dot{\mathbf{h}}_{\mathrm{ur},l}(t)|\right],\notag\\
        &= \pi^2\!f_\mathrm{ur}^2\beta_\mathrm{ur}\!\sum_{k=1}^N\sum_{l=1}^N\!\left\{\!E(\mathbf{R}_{\mathrm{ur},kl})\!-\!(1\!-\!\mathbf{R}_{\mathrm{ur},kl}^2)K(\mathbf{R}_{\mathrm{ur},kl})\!\right\}\!, \label{eq:w2}
    \end{align}
    and $|\dot{\mathbf{h}(t)}|= \frac{d}{dt}|\mathbf{h}(t)|$. Given $Y(t)$, $\mathrm{\dot{SNR}_R}(t)$ has the conditional distribution $\mathcal{N}(0,4\,c\,\omega^2\mathrm{SNR_R}(t))$. Hence,
    % ASK PETE AROUND DOT OR NOT FOR X 
    \begin{align}
        \mathrm{LCR_R}(T) &= \int\displaylimits_0^\infty \dot{x}\,f_{\mathrm{SNR_R,\dot{SNR}_R}}(T,x)\, d\dot{x}, \notag \\
        & = \int\displaylimits_0^\infty \dot{x}\,f_{\mathrm{SNR_R|\dot{SNR}_R}}(\dot{x}\,|\,T)\,f_{\mathrm{SNR_R}}(T)\, d\dot{x}, \nonumber \\
        & = \frac{f_{\mathrm{SNR_R}}(T)}{\sqrt{8\pi c\,T\omega^2}}\int\displaylimits_0^\infty \dot{x}\exp{\left(\frac{-\dot{x}^2}{8cT\omega^2}\right)}\,d\dot{x}, \nonumber \\ 
        & = \sqrt{\frac{2}{\pi}\,c\,T\omega^2} f_{\mathrm{SNR_R}}(T). \label{eq:LCR_RIS_exact}
    \end{align}\end{proof}

The result in (\ref{eq:LCR_RIS_exact}) has the same general form as the LCR for the SNR of MRC in i.i.d. Rayleigh fading channels \cite{ko_general_2002}. This suggests that 
the RIS does not accentuate temporal variations in the SNR relative to standard MIMO techniques.
An exact expression for $f_\mathrm{SNR_R}(T)$ is believed to be intractable. However, $\mathrm{SNR}_\mathrm{R}(t) = cY^2(t)$ and a gamma approximation to $Y(t)$ has been proposed and validated in \cite{singh_optimal_2021}. Using standard transformation theory to compute the approximate PDF of $\mathrm{SNR}_\mathrm{R}(t)$ from the gamma approximation for $Y(t)$ gives the approximate LCR,
\begin{align}\label{approxRISLCR}
    \mathrm{LCR_R}(T) \approx \frac{1}{\Gamma(r)}\sqrt{\frac{\omega^2}{2\pi}\,}\,\theta^r\left(\frac{T}{c}\right)^{\!\!\frac{r-1}{2}}\!\!\exp\!\left(\!-\theta\sqrt{\frac{T}{C}}\right)\!,
\end{align}
where $\theta=\frac{\mathbb{E}[Y(t)]}{\mathrm{Var}[Y(t)]}$, $r\!=\!\theta\mathbb{E}[Y(t)]$ ensure the  gamma PDF has  the correct mean and variance. These are  $\mathbb{E}[Y(t)]=\frac{N}{2}\sqrt{\pi \beta_\mathrm{ur}}$ and $\mathrm{Var[Y(t)]}\!=\beta_\mathrm{ur}\left(N+F\right)-\frac{N^2 \pi \beta_\mathrm{ur}}{4}$, where $F\!=\!\frac{\pi}{4}\sum^N_{i=1}\!\sum_{j\neq i}{}_2F_1\left(-\frac{1}{2},-\frac{1}{2},1,|\mathbf{R}_{\mathrm{ur},i,j}|^2\right)$  from \cite{singh_optimal_2021}.
\subsection{LCR of the Direct Channel}\label{sec:direct}
In the absence of a RIS, or when the RIS is blocked, the SNR for the direct link (setting $\beta_\mathrm{rb}=0$ in \eqref{eq:SNRfull}), is
\begin{equation}
\label{eq:chandir}
    \mathrm{SNR}_\mathrm{d}(t) = \frac{E_s}{\sigma^2}\mathbf{h}_{\mathrm{d}}^\dagger(t)\mathbf{h}_{\mathrm{d}}(t).
\end{equation}
The LCR of the SNR for this channel is given in \cite{ivanis_level_2008} as
\begin{align}
\label{eq:LCR_direct_exact}
    &\mathrm{LCR}_\mathrm{d}(T) = \frac{\sqrt{\pi}}{3}\!\left({2T}\right)^{\frac{3}{2}}\!\!f_\mathrm{d}\!\sum_{n=1}^M\mathrm{e}^{-\frac{T}{\theta_n}}\sum_{\substack{l=1 \\ l\neq n}}^M\!
    \Bigg(\!\frac{\sqrt{\theta_n}}
    {\theta_l(\theta_n\!-\!\theta_l)} \notag\\ &\quad\times\!{}_1F_1\!\left(\!1,\frac{5}{2};-\frac{T}{\theta_l}\right)\!\!\prod\limits_{m \in S_{l,n}}\!\frac{\theta_l\theta_n}{(\theta_l\!-\!\theta_m)(\theta_n\!-\!\theta_m)}\Bigg),\!\!\!
\end{align}
where $S_{l,n} = \{m\in\mathbb{N}\,|\, 1 \leq m \leq M , m\neq n , m\neq l\}$ and $\theta_1...\theta_M$ are the eigenvalues of $\frac{E_s}{\sigma^2}\beta_\mathrm{d}\mathbf{R}_\mathrm{d}$. 

Note that \cite{ivanis_level_2008} was based on unequal power channels. However, the results can be used here, as $\frac{E_s}{\sigma^2}\mathbf{h}_{\mathrm{d}}^\dagger(t)\mathbf{h}_{\mathrm{d}}(t)=\frac{E_s}{\sigma^2}\beta_{\mathrm{d}} \mathbf{u}_{\mathrm{d}}^\dagger\mathbf{R}_\mathrm{d}\mathbf{u}_{\mathrm{d}}$ is statistically identical to $\mathbf{u}_{\mathrm{d}}^\dagger\mathbf{B}\mathbf{u}_{\mathrm{d}}$, where  $\mathbf{B}={\mathrm{diag}}(\theta_1, \ldots ,\theta_M)$ and $\theta_1, \ldots ,\theta_M$ are the eigenvalues of $\frac{E_s}{\sigma^2}\beta_{\mathrm{d}}\mathbf{R}_\mathrm{d}$. Hence, the SNR with spatial correlation  is statistically identical to the SNR with unequal branch powers.

However, (\ref{eq:LCR_direct_exact}) suffers from extreme numerical precision problems. When spatial correlations are small or large, the eigenvalues of $\mathbf{R}_\mathrm{d}$ become very similar. Products of tiny eigenvalue differences occur in the denominator of (\ref{eq:LCR_direct_exact}), resulting in the addition and subtraction of very large terms. The process loses precision and accurate answers cannot be obtained. This is more likely to occur when $M > 10$.

Since (\ref{eq:LCR_direct_exact}) is unusable for large arrays, we develop a novel approximation which is accurate and numerically stable. The $L$ dominant eigenvalues remain unchanged while the $S$ trailing (similar) eigenvalues are replaced by a single constant equal to their average. Hence, 
$\mathrm{SNR_d} = \sum_{i=1}^{M}\theta_i|\mathbf{u}_{\mathrm{d},i}|^2$ is approximated by $\sum_{i=1}^{L}\lambda_i|\mathbf{u}_{\mathrm{d},i}|^2+\lambda_{L+1}\sum_{i=L+1}^{M}|\mathbf{u}_{\mathrm{d},i}|^2$, where $\lambda_i=\theta_i$ for $i=1,2, \ldots L$, $\lambda_{L+1}=\frac{1}{S}\sum_{i=L+1}^{M}\theta_i$ and $S=M-L$. This is equivalent to replacing $\frac{\beta_{\mathrm{d}} E_s}{\sigma^2}\mathbf{R}_\mathrm{d}$ by $\Lambda=\mathrm{diag}(\lambda_1, \ldots ,\lambda_L, \lambda_{L+1} \ldots ,\lambda_{L+1})$. The LCR for this approximate SNR variable is now derived. 
 
\begin{theorem}\label{Thm1}
    The LCR for the SNR variable, $\mathrm{SNR_a}(t) = \mathbf{g}^\dagger(t)\mathbf{g}(t)$, when $\mathbf{g}(t)\sim\mathcal{CN}(\mathbf{0},\Lambda)$, $\Lambda=\mathrm{diag}(\lambda_1, \ldots ,\lambda_L, \lambda_{L+1} \ldots ,\lambda_{L+1})$, and the elements of the $M \times 1$ vector, $\mathbf{g}(t)$, have temporal correlation  $\mathbb{E}\left[\mathbf{g}_{\mathrm{k}}(t)\mathbf{g}^*_{\mathrm{k}}(t+\tau)\right]=J_0(2\pi f_\mathrm{d}\tau)$, is given by 
\vspace{-0.5em} 
\begin{multline}
    \mathrm{LCR_a}(T) = \frac{\kappa_0j}{4\pi}\sum_{r=1}^L\bigg(B_rI(r,S,r,1) + \sum_{k=1}^SC_kD_{r,k} \\ \times I(r,S-k+1,L+1,k)\bigg), \label{eq:LCRdfinal}
\end{multline} 
where
\begin{align}
    &I(r,p,s,k)\!=\!\sum_{t\neq r}\!\cstpl{0.75cm}\frac{2\pi(-1)^{k+3/2}T^{k+1/2}}{\!\!\prod\limits_{m\neq r,t}\!\!\!(A_{t,r}\!-\!A_{m,r})(A_{t,r}\!-\!A_{L+1,r})^p} \notag \\ &\times\frac{{}_1F_1\!\!\left(k\!+\!\frac{1}{2},k\!+\!\frac{3}{2},(A_{t,r}\!-\!\frac{1}{\lambda_s})T\right)}{\mathrm{e}^{A_{t,r}T}\Gamma\left(k+\frac{3}{2}\right)}\!+ \!\sum_{m=1}^p\frac{2\pi(-1)^{p+k+1/2}}{\Gamma\left(m+k+\frac{1}{2}\right)} \notag \\ &\times\!\frac{T^{m+k-1/2}{}_1F_1\!\!\left(k\!+\!\frac{1}{2},m\!+\!k\!+\!\frac{1}{2},\!(A_{L+1,r}\!-\!\frac{1}{\lambda_s})T)\!\right)}{\mathrm{e}^{A_{L+1,r}T}\!\!\prod\limits_{q\neq r,t}\!\!(A_{t,r}\!-\!A_{q,r})(A_{L+1,r}\!-\!A_{t,r})^{p-m+1}}\!\!\!\cstpr{0.75cm}\!\!,\!\!\! \label{eq:I}
\end{align}
\begin{equation}
    B_r = \frac{\left(\prod_{i=1}^L\lambda_i\right)\lambda_r^{L+S-1/2}\lambda^S_{L+1}\left(2\pi^2f_\mathrm{d}^2\right)^{L+S+1/2}}{\prod_{i\neq r}(\lambda_i-\lambda_r)(\lambda_{L+1}-\lambda_r)^S}, \label{eq:Br}
\end{equation}
\begin{equation}
    C_k = \sum_{s=1}^{k}\frac{(-1)^k}{4^{k-1}}\binom{2s-2}{s-1}\binom{2k-2s}{k-s}\frac{k}{s}, \label{eq:Ck}
\end{equation}
\begin{equation}
    D_{r,k} = \frac{\left(\prod_{i=1}^L\lambda_i\right)\lambda_r^{L+S-k-1}\lambda^{S+3/2}_{L+1}\left(2\pi^2f_\mathrm{d}^2\right)^{L+S+1/2}}{\prod_{i\neq r}(\lambda_i-\lambda_r)(\lambda_{L+1}-\lambda_r)^{S-k+1}}, \label{eq:Drk}
\end{equation}
$\kappa_0 = \left((2\pi^2f_\mathrm{d}^2)^{L+S}\prod_{i=1}^L\lambda_i^2\lambda^{2S}_{L+1}\right)^{-1}$, $A_{i,r} = \frac{\lambda_i+\lambda_r}{\lambda_i\lambda_r}$, $S$ is the number of  repeated values of $\lambda_{L+1}$ in $\Lambda$, $L = M-S$, and $\lambda_1 ... \lambda_L$ are the largest $L$ values in $\Lambda$.
\end{theorem}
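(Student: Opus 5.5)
The approach is a joint characteristic function computation followed by a residue-based inversion; the structure of the stated formula, with ${}_1F_1$ terms, factors $2\pi$ and $A_{i,r}$, points to this route. Write $\mathbf{g}_i(t)=x_i+jy_i$. Because the temporal correlation is $J_0(2\pi f_\mathrm{d}\tau)$, the derivative $\dot{\mathbf{g}}_i$ is independent of $\mathbf{g}_i$. Its components are Gaussian with variance $\lambda_i\pi^2f_\mathrm{d}^2$ per real dimension, i.e. $-\ddot\rho(0)\lambda_i/2$. Then $\mathrm{SNR_a}=\sum_i|\mathbf{g}_i|^2$ and $\dot{\mathrm{SNR}}_\mathrm{a}=2\sum_i(x_i\dot x_i+y_i\dot y_i)$.

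First, compute the joint CF $\Phi_{\mathrm{SNR_a},\dot{\mathrm{SNR}}_\mathrm{a}}(\omega_1,\omega_2)$. Conditioning on $\mathbf{g}$, the derivative term is Gaussian, which gives a factor $\exp(-2\omega_2^2\pi^2f_\mathrm{d}^2\lambda_i|\mathbf{g}_i|^2)$. Averaging over the exponential variable $|\mathbf{g}_i|^2$ then yields
\[
\Phi=\prod_{i=1}^{M}\frac{1}{1-j\omega_1\lambda_i+2\pi^2f_\mathrm{d}^2\lambda_i^2\omega_2^2},
\]
with the last $S$ factors identical, raised to the power $S$.

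Next, I would not invert in $\omega_2$ and then integrate over $\dot x$ separately. Instead, I would use the standard identity
\[
\mathrm{LCR}(T)=\int_0^\infty\dot x f(T,\dot x)\,d\dot x=-\frac{1}{4\pi^2}\iint \frac{\Phi(\omega_1,\omega_2)}{\omega_2^2}e^{-j\omega_1T}\,d\omega_1d\omega_2
\]
(in principal-value / finite-part form). I would perform the $\omega_1$ integral first by residues. For fixed $\omega_2$, the poles are at $\omega_1=-j(1+2\pi^2f_\mathrm{d}^2\lambda_i^2\omega_2^2)/\lambda_i$. There are $L$ simple poles and one pole of order $S$, and the pole of order $S$ requires an $(S-1)$-fold derivative. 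For each dominant index $r$, I would then factor the $\omega_2$-dependence. Differences of pole locations produce terms like $(\lambda_i+\lambda_r)/(\lambda_i\lambda_r)=A_{i,r}$ after the substitution $u=2\pi^2f_\mathrm{d}^2\omega_2^2$, or equivalently after rescaling in $\omega_2$. This gives the coefficients $B_r$ and $D_{r,k}$, which carry the powers of $\lambda_r$, $\lambda_{L+1}$ and $(\lambda_{L+1}-\lambda_r)^{S-k+1}$ seen in (\ref{eq:Br}) and (\ref{eq:Drk}). The global constant $\kappa_0$ comes from pulling out the leading coefficients $\prod\lambda_i^2(2\pi^2f_\mathrm{d}^2)$ of the quadratic factors.

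The remaining one-dimensional integrals over $\omega_2$ (or over $u$) have the form of a rational function times $e^{-uT\cdot\text{const}}$ and a half-integer power. I would do a partial-fraction expansion in the variable $A$ over the distinct $A_{t,r}$ and the repeated $A_{L+1,r}$. The simple-pole part gives the first sum in (\ref{eq:I}) and the repeated-pole part gives the inner sum over $m\le p$. Each resulting term is of the type $\int_0^T s^{k-1/2}(T-s)^{m-1}e^{-as}\,ds$, which is exactly a ${}_1F_1(k+\tfrac12,m+k+\tfrac12;\cdot)$ with the indicated Gamma normalisation. This explains the parameters in $I(r,p,s,k)$.

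The main obstacle is the pole of order $S$. Its Leibniz-rule derivatives generate expansions of $(1+\cdot)^{-1/2}$-type square roots, obtained when integrating $|\dot{\mathrm{SNR}}|$ or equivalently completing the $\omega_2$ integral. The resulting convolution of two central binomial series gives the combinatorial coefficient $C_k$ in (\ref{eq:Ck}). I would handle this by writing the repeated-pole residue as a Cauchy product of the binomial series for $(1-z)^{-1/2}$ with itself and its derivative. I would then check the result against the case $S=1$, where the formula should reduce to (\ref{eq:LCR_direct_exact}) of \cite{ivanis_level_2008}.
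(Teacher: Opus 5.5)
\textbf{Gap: the order of integration.} You commit to evaluating the $\omega_1$ integral first by residues. Every structural feature you then invoke arises only in the opposite order: pole differences giving $A_{i,r}$, half-integer powers, partial fractions over $A_{t,r}$ and $A_{L+1,r}$, the convolution producing ${}_1F_1$, and the central binomials in $C_k$.

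Put $\mu_i=1/\lambda_i+2\pi^2f_\mathrm{d}^2\lambda_i\omega_2^2$. The $\omega_1$-residues give $\prod_i\lambda_i^{-1}\sum_r e^{-\mu_rT}/\prod_{i\neq r}(\mu_i-\mu_r)$, plus derivative terms from the order-$S$ pole. Here $\mu_i-\mu_r=(\lambda_i-\lambda_r)\bigl(2\pi^2f_\mathrm{d}^2\omega_2^2-1/(\lambda_i\lambda_r)\bigr)$. This has the following consequences:
\begin{itemize}
\item The difference is not $A_{i,r}$, and it vanishes at the real points $\omega_2=\pm(2\pi^2f_\mathrm{d}^2\lambda_i\lambda_r)^{-1/2}$. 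Every residue term therefore has poles on the $\omega_2$ contour, of higher order in the terms from the $(S-1)$-fold derivative. They need principal values or Hadamard finite parts and cancel only in the full sum.
\item The per-term $\omega_2$ integrals are Gaussian-over-rational, i.e.\ Dawson/erfi type.
\item The $(S-1)$-fold derivative produces only powers of $T$ and of these rational factors, never square roots, so it cannot generate $C_k$.
\item No integral $\int_0^T s^{k-1/2}(T-s)^{m-1}e^{-as}\,ds$ ever appears, because the $T$-dependence is already explicit after the $\omega_1$ residues.
\end{itemize}
So the match with the stated formula is asserted, not derived. Carried out correctly, your route gives a differently organised expression, and its equality with the theorem would still need proof.

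\textbf{The missing idea: integrate over $\omega_2$ first,} as the paper does, using the subtracted form $J(\omega_1)=\int_{-\infty}^{\infty}\bigl(\Phi(\omega_1,\omega_2)-\Phi(\omega_1,0)\bigr)\omega_2^{-2}\,d\omega_2$.
\begin{itemize}
\item Write each factor as $2\pi^2f_\mathrm{d}^2\lambda_m^2(\omega_2^2-a_m)$ with $a_m=(j\omega_1\lambda_m-1)/(2\pi^2f_\mathrm{d}^2\lambda_m^2)$; this yields $\kappa_0$.
\item Partial fractions in $\omega_2^2$ then involve $a_r-a_i=\frac{\lambda_i-\lambda_r}{2\pi^2f_\mathrm{d}^2\lambda_i\lambda_r}(j\omega_1-A_{i,r})$. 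This is the actual origin of $A_{i,r}$.
\item Since $\mathrm{Re}\,a_m<0$ for real $\omega_1$, there are no real-axis poles. The residue at the upper-half-plane root gives $\mathrm{f.p.}\int_{-\infty}^{\infty}\omega_2^{-2}(\omega_2^2-a)^{-k}d\omega_2=-\pi j\,C_k\,a^{-(k+1/2)}$, with $C_k=(-1)^k(2k-1)\binom{2k-2}{k-1}4^{1-k}$. This equals the stated convolution sum.
\item That step produces the factors $(j\omega_1-1/\lambda_r)^{-3/2}$ and $(j\omega_1-1/\lambda_{L+1})^{-(k+1/2)}$.
\item Only then invert in $\omega_1$. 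Use partial fractions in $j\omega_1$ over the simple $A_{t,r}$ and the order-$p$ $A_{L+1,r}$. Each term is exactly your convolution integral (3.384.7 in \cite{gradshteyn_table_1980}), which gives the ${}_1F_1$ terms of $I(r,p,s,k)$.
\end{itemize}
Your joint CF, the independence of $\dot{\mathbf{g}}_i$ from $\mathbf{g}_i$, and the finite-part double-integral formula are all correct.
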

\begin{proof}
    See App. A for the derivation of (\ref{eq:LCRdfinal}).
\end{proof}
 %When $k$ is set so the small eigenvalues corresponding to 1\% of the total power are averaged, this approximation performs within x\% of the simulated average. Therefore, it is a very reliable tool to evaluate future large systems with a range of correlations.

\section{Numerical Results}\label{sec:numresults}
Numerical results verify the analysis above and explore the LCR performance of an optimised single-user RIS system for a range of system sizes, link powers and spatial correlations. The effects of UE speed can be seen in the LCR values as they are normalized in the usual way \cite{ivanis_level_2008, beaulieu_level_2003} by Doppler frequency (DF) ($f_m$ represents the relevant DF). The channel gain values were selected according to the distance based path loss model in \cite{wu_intelligent_2019}, where $\beta =  C_0\left({d}/{D_0}\right)^{-\alpha}$,
%\begin{equation}
%    \beta =  C_0\left({d}/{D_0}\right)^{-\alpha},
%\end{equation}
$D_0$ is the reference distance of 1 m, $C_0$ is the pathloss at $D_0$ (-30 dB), $d$ is the link distance in metres and $\alpha$ is the pathloss exponent ($\alpha_{\mathrm{d}} =3.5$, $\alpha_{\mathrm{rb}} = 2$ and $\alpha_{\mathrm{ur}} = 2.8$). Setting $d_\mathrm{rb}$, $d_x$ and $d_y$ in Fig. \ref{fig:power_layout} gives the link distances.

\begin{figure}[ht]
    \centering
    \includegraphics[trim={0cm 0cm 0cm 0cm},clip,scale=0.6]{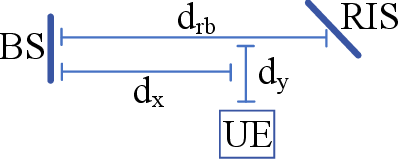}
    \caption{Plan view of the simulation system layout \cite{wu_intelligent_2019}.}
    \label{fig:power_layout}
\end{figure}

Three layouts are used for simulation, all of which include $d_\mathrm{rb} = 40$ m and $d_y = 5$ m. Layout A ($d_x = 29$ m) gives a system where the direct and RIS links have similar power, layout B ($d_x = 20$ m) is a system with a dominant direct link and layout C ($d_x = 35$ m) is  a system with a dominant RIS link. The correlation matrices $\mathbf{R}_\mathrm{d}$ and $\mathbf{R}_\mathrm{ur}$ can represent any  model. For simulation purposes, we use the Rayleigh fading correlation model proposed in \cite{bjornson_rayleigh_2021}, where
\begin{equation}
    \mathbf{R}_{n,m} = \mathrm{sinc} \left( 2 \,d_{mn} \right),\quad n,m = 1, \dots, L,
\end{equation}
$\mathbf{R}\in\{\mathbf{R}_\mathrm{d},\mathbf{R}_\mathrm{ur}\}$, $d_{mn}$ is the Euclidean distance between BS antennas/RIS elements $m$ and $n$, measured in wavelength units, and $L$ is the number of BS antennas/RIS elements.

The steering vectors $\mathbf{a}_\mathrm{b}$ and $\mathbf{a}_\mathrm{r}$ correspond to the vertical uniform rectangular array (VURA) model in  \cite{miller_analytical_2019}. Elements are arranged in a VURA at intervals of $d_b$ and $d_r$ wavelengths at the BS and RIS, respectively. $M_x$ and $N_x$ are the number of BS and RIS elements per row and $M_z$ and $N_z$ are the number of BS and  RIS elements per column, such that $M = M_xM_z$ and $N = N_xN_z$. $\theta_\mathrm{D}$ and $\phi_\mathrm{D}$ are the elevation and azimuth angles of departure (AoDs) at the RIS and $\theta_\mathrm{A}$ and $\phi_\mathrm{A}$ are the corresponding elevation and azimuth angles of arrival (AoAs) at the BS. We assume the RIS is on a $\frac{\pi}{4}$ rad angle with respect to the BS, so $\phi_\mathrm{D} = \frac{5\pi}{4}$ rad and $\phi_\mathrm{A} = \frac{\pi}{4}$ rad. We also assume both are at the same height, so $\theta_\mathrm{D}=\theta_\mathrm{A}=\frac{\pi}{2}$ rad. For all simulations, $10^6$ replicates were generated.

\subsection{Performance Validation and Variation of System Size}
Figs. \ref{fig:systemsizeM} and \ref{fig:systemsizeN} investigate the efficacy of the analytical methods in Theorems 1 and 2, and the impact of element numbers on the LCR for the direct-only (Fig. \ref{fig:systemsizeM}) and RIS-only (Fig. \ref{fig:systemsizeN}) links. Layout A was used for both figures with $d_\mathrm{b} = 0.5$ and $d_\mathrm{r} = 0.1$.
\begin{figure}
    \begin{subfigure}[b]{0.48\textwidth}
        \centering
        \includegraphics[trim={0.9cm 0.13cm 0.9cm 0.11cm},clip,scale=0.4]{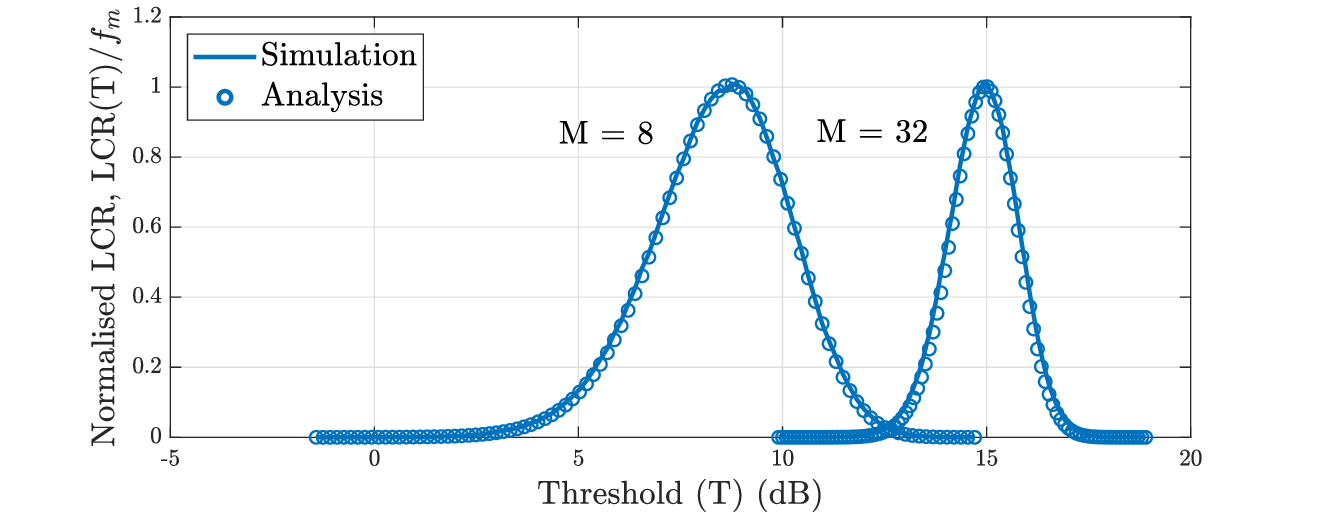}
        \subcaption{Direct link with $M \in \{8, 32\}$ and $M_x \in \{4,8\}$.}
        \label{fig:systemsizeM}
    \end{subfigure}
    \begin{subfigure}[b]{0.48\textwidth}
        \centering
        \includegraphics[trim={0.9cm 0.13cm 0.9cm 0.09cm},clip,scale=0.4]{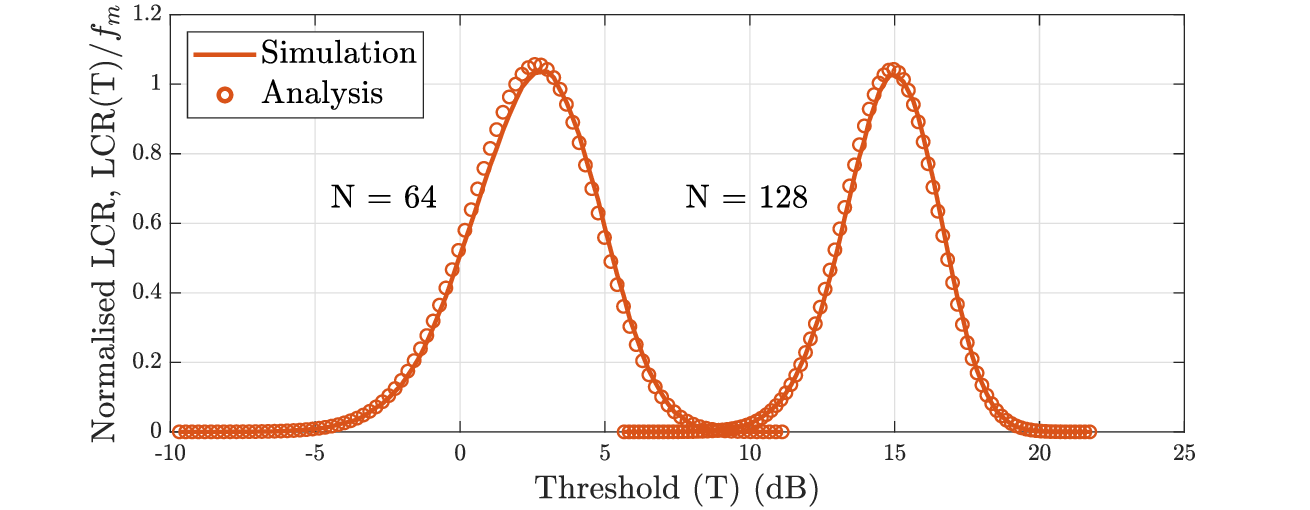}
        \subcaption{RIS link with $N \in \{64, 128\}$ and $N_x \in \{8,16\}$.}
        \label{fig:systemsizeN}
    \end{subfigure}
    \caption{Simulated and analytical LCR comparison.} 
    \vspace{-0.15cm}
\end{figure}
Fig. 3 validates the accuracy of (\ref{eq:LCRdfinal}) and \eqref{approxRISLCR} which are based on Theorems 1 and 2. In Fig. \ref{fig:systemsizeM}, only the 2 leading eigenvalues were kept\footnote{due to being numerically stable and the smallest number where the CDF of the approximation was visibly indistinguishable from the original.}, showing that accuracy is maintained, even while averaging large numbers of  trailing eigenvalues. As expected, the SNR scales up with the number of RIS (N) and BS (M) elements, so the curves slide to the right. The similarity of the LCR curves for RIS and direct links is explored further in Secs. \ref{sec:corr} and \ref{sim}.
\subsection{Variation of Link Powers}
Figs. \ref{fig:power_variationa} and \ref{fig:power_variationb} investigate the impact of the power of each link in a full channel (both direct and RIS links operational) scenario. %, and what occurs when there is a shadowed link (SL). 
Figs. \ref{fig:power_variationa} and \ref{fig:power_variationb} show the LCR for layout B (a dominant direct link) and layout C (a dominant RIS link), respectively. The direct-only, RIS-only and full-channel LCRs are plotted, as well as the full channel LCR with a 50\% power loss in the dominant link (denoted shadowed link (SL)). The array parameters for Fig. 4 are $M = 32$, $M_x = 8$, $N = 128$, $N_x = 16$, $d_\mathrm{b} =0.5$ and $d_\mathrm{r}=0.1$.
\begin{figure}
    \begin{subfigure}[b]{0.48\textwidth}
        \centering
        \includegraphics[trim={0.9cm 0.13cm 0.9cm 0.11cm},clip,scale=0.4]{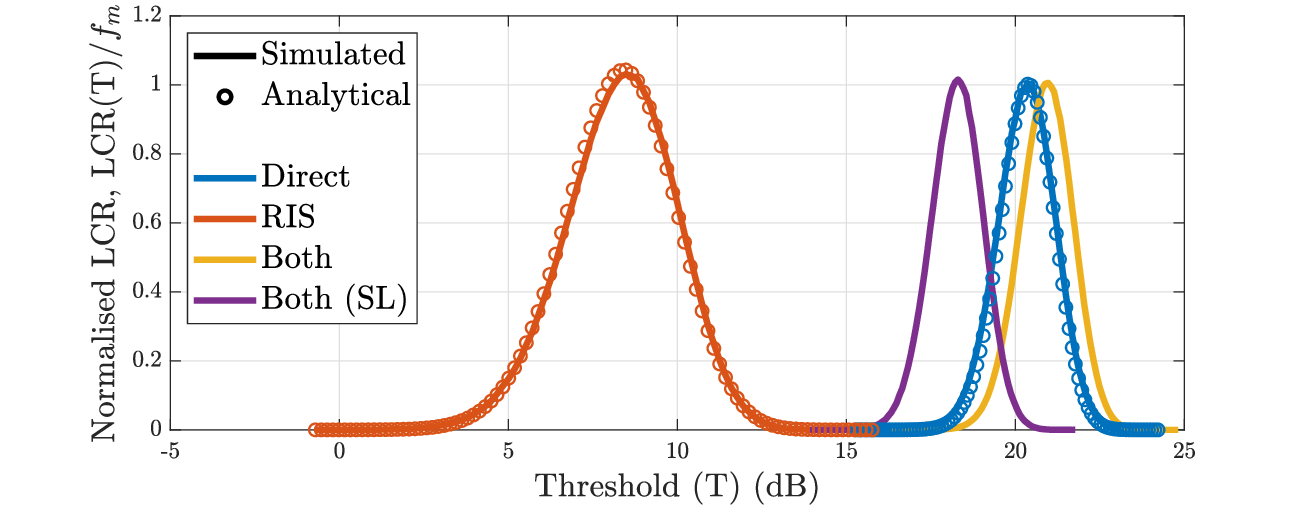}
        \subcaption{Dominant direct link (Layout B).}
        \label{fig:power_variationa}
    \end{subfigure}
    \begin{subfigure}[b]{0.48\textwidth}
        \centering
        \includegraphics[trim={0.9cm 0.13cm 0.9cm 0.09cm},clip,scale=0.4]{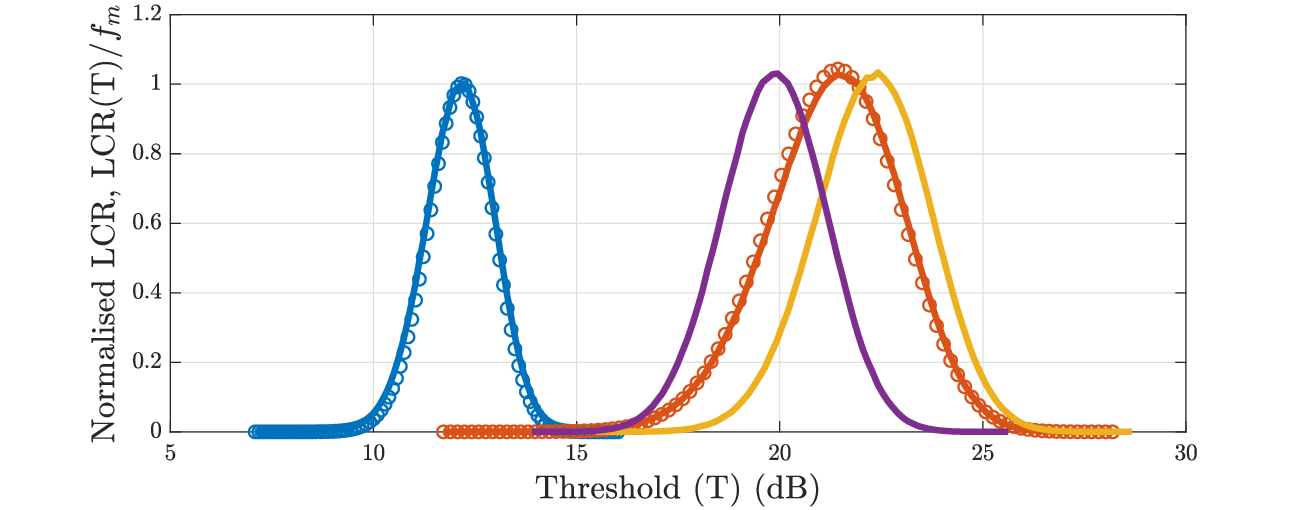}
        \subcaption{Dominant RIS link (Layout C).}
        \label{fig:power_variationb}
    \end{subfigure}
    \caption{Full RIS system shown in Fig. \ref{fig:LCR_Paper_System_Diagram} with and without shadowing applied to the dominant link.}
    \vspace{-0.5cm}
\end{figure}

When the direct and RIS links are combined, the resulting LCR resembles the LCR of the dominant link, with a slight gain in SNR provided by the weaker link shifting the LCR to the right. Reducing the power by 50\% in the dominant link resulted in an approximate 3 dB shift in the LCR curve, but minimal change in shape. This shows that the temporal behaviour is largely determined by the dominant link.

\subsection{Impact of Correlation on LCR}
\label{sec:corr}
Fig. \ref{fig:correlationa} compares LCRs of different correlation levels between elements at both the BS and RIS. Layout A was used for Fig. \ref{fig:correlationa}, with $N = 128$, $N_x = 16$, $M= 32$ and $M_x=8$.  RIS element spacings of $d_\mathrm{r}\in\{0.1,0.5\}$ and BS element spacings of $d_\mathrm{b}\in\{0.5,1\}$ are shown. Layout C was used for Fig. \ref{fig:correlationb}, with $N=M=32$, $N_x=M_x=8$ and $d_\mathrm{r}=d_\mathrm{b}\in\{0.5,1\}$ to investigate the differences between direct and RIS channels under very similar SNR conditions.

\begin{figure}
    \begin{subfigure}[b]{0.48\textwidth}
        \centering
        \includegraphics[trim={0.9cm 0.13cm 0.9cm 0.13cm},clip,scale=0.39]{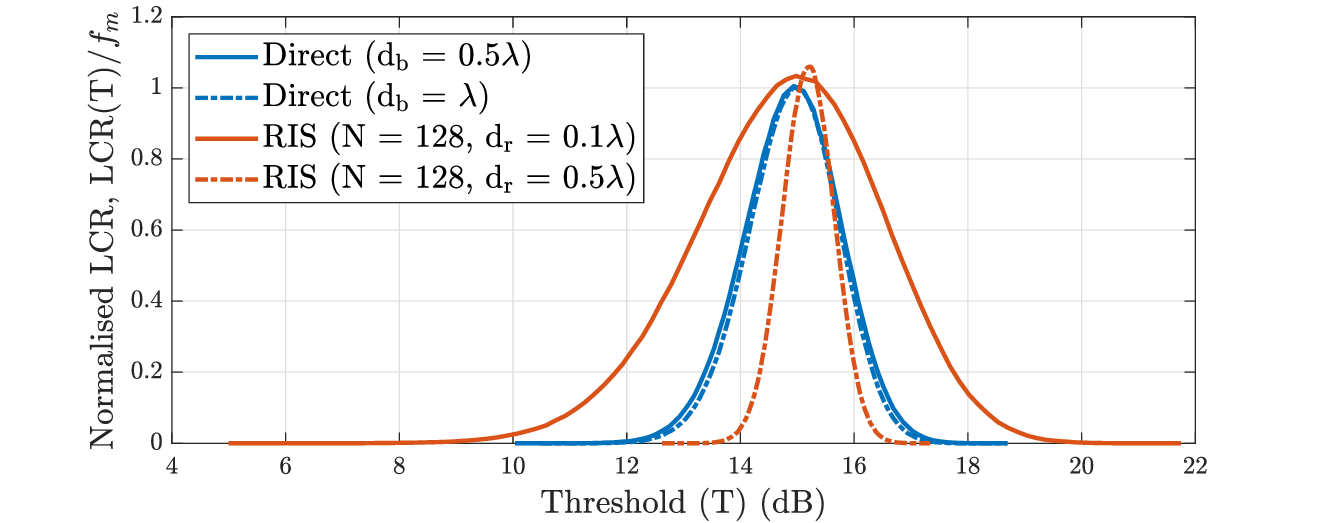}
        \subcaption{Balanced link (Layout A).}
        \label{fig:correlationa}
    \end{subfigure}
    \begin{subfigure}[b]{0.48\textwidth}
    \centering
        \includegraphics[trim={0.9cm 0.13cm 0.9cm 0.13cm},clip,scale=0.39]{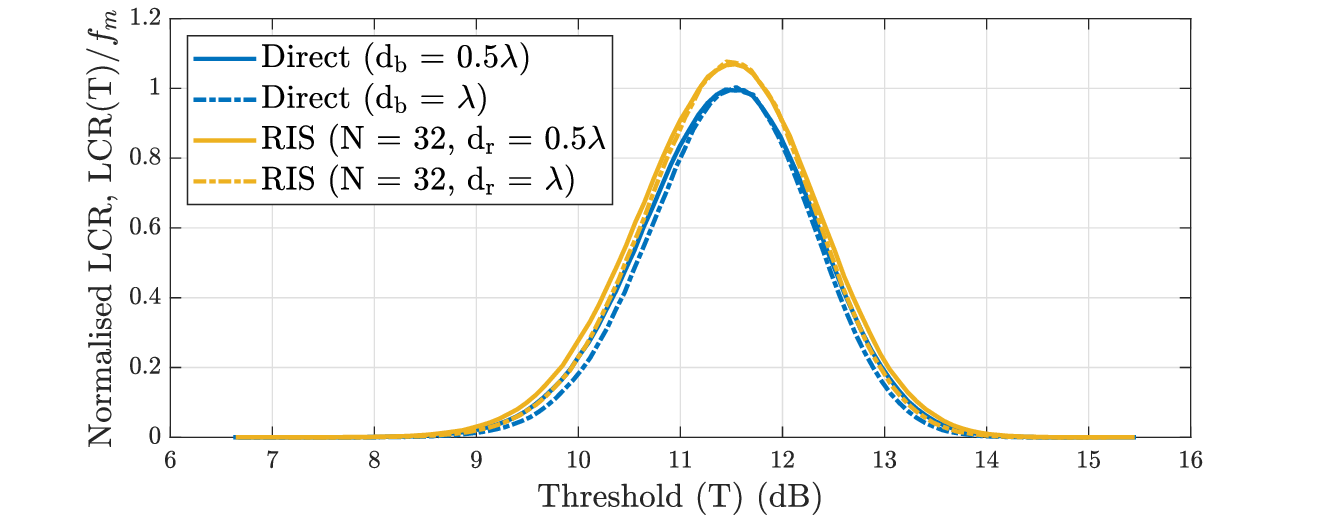}
        \subcaption{Dominant RIS link (Layout C) with $N = 32$.}
        \label{fig:correlationb}
    \end{subfigure}
    \caption{Simulated LCRs for a range of element spacings.}
    \vspace{-0.5cm}
\end{figure}

There is little change in the direct-only LCR when $d_\mathrm{b} = 0.5$ and $d_\mathrm{b} = 1$, as both spacings result in very low correlation. The difference is more noticeable in the RIS-only LCR of Fig. \ref{fig:correlationa} as the smaller spacing of $d_\mathrm{r} = 0.1$ causes noticeably larger correlations. We observe the general trend that the LCR curve narrows as correlation decreases, due to the sums of random variables in (\ref{eq:chandir}) and (\ref{eq:RISonlySNR}). More averaging occurs in these sums when the correlation is low, causing less variance in the SNRs and narrower LCR curves. 

In Fig. \ref{fig:correlationb} we compare direct-only and RIS-only links with equal element numbers and correlations in Layout C, where the SNRs are very similar. The LCRs are very similar, despite the direct-only  SNR in (\ref{eq:chandir}) being a sum of squares and the RIS-only SNR in (\ref{eq:RISonlySNR}) being a sum of amplitudes squared. This similarity is explored mathematically in Sec. \ref{sim}.

\subsection{Special Case: Independent Channels}\label{sim}
To explain the similarity between LCRs for direct-only and RIS-only channels, we consider the case of independent channels.  Using the definitions (\ref{eq:chandir}) and (\ref{snrdot}) and  results in \cite{speed} we define the speed at which the SNR changes by $|\frac{d}{dt}\mathrm{SNR}(t)|$ and obtain the following expressions for the mean speed:
\begin{align}
    \mathbb{E}\left[|\tfrac{d}{dt}\mathrm{{SNR}_d}(t)|\right] &= \pi f_\mathrm{d} \times \frac{\beta_\mathrm{d}E_s}{\sigma^2}\times\frac{\Gamma(M+\frac{1}{2})}{\Gamma(M)}\frac{2\sqrt{2}}{\sqrt{\pi}}, \\
    \mathbb{E}\left[|\tfrac{d}{dt}\mathrm{{SNR}_R}(t)|\right] &= \pi f_\mathrm{ur} \times \frac{\beta_\mathrm{rb}\beta_\mathrm{ur}E_s}{\sigma^2}\times MN^\frac{3}{2}\sqrt{2}.
\end{align}
From (5.11.12, \cite{nist_digital_2023}), $\frac{\Gamma(M+\frac{1}{2})}{\Gamma(M)}\sim \sqrt{M}$ for large $M$. Hence,
\begin{align}
    \mathbb{E}\left[|\tfrac{d}{dt}\mathrm{{SNR}_d}(t)|\right] &\sim \pi f_\mathrm{d} \times \frac{\beta_\mathrm{d}E_s}{\sigma^2}\times\sqrt{\frac{8M}{\pi}}, \label{eq:SNRddot}\\
    \mathbb{E}\left[|\tfrac{d}{dt}\mathrm{{SNR}_R}(t)|\right]\!&=\!\pi f_\mathrm{ur} \!\times\!\frac{M^{\frac{1}{2}}N^{\frac{3}{2}}\beta_\mathrm{rb}\beta_\mathrm{ur}E_s}{\sigma^2}\!\times\! \sqrt{2M}.\label{eq:SNRrdot}
\end{align}
From (\ref{eq:SNRddot}) and (\ref{eq:SNRrdot}), we see that the rates of change contain similar terms for both links: a Doppler term, a channel power term and the final term (approximately $\sqrt{2M}$). The effect of the RIS is seen to be a power scaling, where the RIS path with power $\beta_\mathrm{rb}\beta_\mathrm{ur}$ is scaled by M$^{\frac{1}{2}}N^{\frac{3}{2}}$. Hence, direct and RIS channels will have similar temporal behaviour if their SNR values are similar.

\section{Conclusion}
We have derived LCR expressions for the RIS-only channel and direct-only channel which allow for analytical and numerical comparisons of the two systems. We showed that increasing the number of and spacing between antenna elements at both the BS and RIS reduces LCRs away from the mean SNR. Critically, we showed that the use of a RIS does not significantly amplify temporal changes in the channel, which is a promising result for RIS implementation given the complexity of CSI acquisition. Also, a numerically stable approximation to the LCR of MRC in correlated Rayleigh fading with large numbers of antennas was developed.

\section*{Appendix A \\ Proof of Theorem 1}
Writing $\mathbf{g}=\Lambda^{\frac{1}{2}}\mathbf{u}$ where  $\mathbf{u}\sim \mathcal{CN}(\mathbf{0},\mathbf{I})$, gives
\begin{align}
    \mathrm{SNR_a} = \sum_{i=1}^{L}\lambda_i|\mathbf{u}_{i}|^2 + \lambda_{L+1}\!\!\sum_{i=L+1}^M\!\!|\mathbf{u}_{i}|^2,
\end{align}
where $|\mathbf{u}_{i}|^2\sim\mathrm{Exp}(1)$. Note that $2\sum_{i=L+1}^M|\mathbf{u}_{i}|^2\sim\chi^2_{2S}$ so the problem has changed from a linear combination of exponentials as in \cite{ivanis_level_2008} to a linear combination of exponentials and a single chi-squared random variable. From \cite{ivanis_level_2008},
\begin{align}
    \mathrm{LCR_a}(T) &= \frac{-1}{4\pi^2}\int\displaylimits_{-\infty}^\infty\int\displaylimits_{-\infty}^\infty\!\!\frac{\Phi_{\mathrm{SNR_a},\mathrm{\dot{SNR}_a}}\!(\omega_1,\omega_2)\!-\!\Phi_\mathrm{SNR_a}\!(\omega_1)}{\omega_2^2} \notag \\ &\quad\times \mathrm{e}^{-j\omega_1T}d\omega_1d\omega_2, \notag \\ 
    & = \frac{-1}{4\pi^2}\int\displaylimits_{-\infty}^\infty\!J(\omega_1)\mathrm{e}^{-j\omega_1T}d\omega_1, \label{eq:LCRd}
\end{align}
where $\Phi_\mathrm{SNR_a}$is the CF of $\mathrm{SNR_a}$ and $\Phi_\mathrm{SNR_a,\dot{SNR}_a}$ is the joint CF of $\mathrm{SNR_a}$ and $\mathrm{\dot{SNR}_a}$. Using known results on the CF of exponentials and $\chi^2$ variables, 
\begin{align}
    \Phi_{\mathrm{SNR_a}}(\omega_1) = \prod_{i=1}^{L}\left(\frac{1}{1-j\omega_1\lambda_i}\right)\!\left(\frac{1}{1-j\omega_1\lambda_{L+1}}\right)^S.
\end{align}
To obtain the joint CF, let $r_i=|\mathbf{u}_{i}|$, so that $\mathrm{SNR_a}=\sum_{i=1}^L\lambda_ir_i^2 + \lambda_{L+1}\sum_{i=L+1}^Mr_i^2$ and $\mathrm{\dot{SNR}_a}=2\sum_{i=1}^L\lambda_ir_i\dot{r_i} + 2\lambda_{L+1}\sum_{i=L+1}^Mr_i\dot{r_i}$. It is known that $\dot{r_i}\sim\mathcal{N}(0,\pi^2f_\mathrm{d}^2)$ \cite{ivanis_level_2008} and $\dot{r_i}$ is independent of $r_i$. Hence,
\begin{align}
    \mathrm{\dot{SNR}_a}\sim\mathcal{N}\left(\!0,4\pi^2f_\mathrm{d}^2\bigg(\!\sum_{i=1}^L\!\lambda_i^2r_i^2\!+\!\lambda_{L+1}\!\!\!\!\sum_{i=L+1}^M\!\!\!r_i^2\bigg)\!\!\right),
\end{align}
and the derivative has the representation $\mathrm{\dot{SNR}_a}=\theta Z$, where $\theta^2 = 4\pi^2f_\mathrm{d}^2\left(\sum_{i=1}^L\lambda_i^2r_i^2+\lambda_{L+1}\sum_{i=L+1}^Mr_i^2\right)$, $Z\sim \mathcal{N}(0,1)$ and is independent of $r_i$. Therefore,
\begin{align}
    \!\!\Phi_{\mathrm{SNR_a,\dot{SNR}_a}}\!\!(\omega_1,\omega_2)&\!=\! \mathbb{E}\!\left[\mathrm{e}^{j\omega_1\mathrm{SNR_a} + j\omega_2\mathrm{\dot{SNR}_a}}\right]\notag \\ 
    &\!=\!\mathbb{E}\!\left[\mathbb{E}\!\left[\mathrm{e}^{j\omega_2\theta Z}\right]\mathrm{e}^{j\omega_1\mathrm{SNR_a}}\right] \notag
\end{align}
\begin{multline}
        \Phi_{\mathrm{SNR_a,\dot{SNR}_a}}\!\!(\omega_1,\omega_2)\!=\!\mathbb{E}\Big[\mathrm{exp}\Big(\!\sum_{i=1}^{L}\!\left(j\omega_1\lambda_i\!-2\pi^2f_\mathrm{d}^2\omega_2^2\lambda_i^2\right) \\ \times r_i^2\Big)\mathrm{exp}\left(\!\left(j\omega_1\lambda_{L+1}\!-2\pi^2 f_\mathrm{d}^2\omega_2^2\lambda_{L+1}^2\right)\!W\right)\!\Big],
\end{multline}
where $W=\sum_{i=L+1}^Mr_i^2$. Again, using known results on the CF of exponentials and $\chi^2$ variables,
\begin{align}
    \Phi_{\mathrm{SNR_a,\dot{SNR}_a}}(\omega_1,&\omega_2) = \prod_{i=1}^L\left(\frac{1}{1\!-\!j\omega_1\lambda_i\!+\!2\pi^2f_\mathrm{d}^2\omega_2^2\lambda_i^2}\right) \notag \\ &\!\times\!\left(\frac{1}{1\!-\!j\omega_1\lambda_{L+1}\!+\!2\pi^2f_\mathrm{d}^2\omega_2^2\lambda_{L+1}^2}\right)^{\!\!S}.
\end{align}
Rewriting $J(\omega_1)$ in terms of only the joint CF,
\begin{align}
    \!\!\!J(\omega_1)\!&=\!\!\!\!\int\displaylimits_{-\infty}^\infty\!\!\frac{\Phi_{\mathrm{SNR_a,\dot{SNR}_a}}\!\!(\omega_1,\!\omega_2)\!-\!\Phi_{\mathrm{SNR_a,\dot{SNR}_a}}\!\!(\omega_1,0)}{\omega_2^2}d\omega_2.\!\!\label{eq:Jomega1}
\end{align}
Expanding (\ref{eq:Jomega1}) using (2.102,\cite{gradshteyn_table_1980}) with $\omega_2$ as the variable,
\begin{align}
    \!&J(\omega_1)\!=\!\kappa_0\!\sum_{r=1}^L\!\frac{1}{\prod_{i\neq r}(\!a_r\!-\!a_i\!)}\!\bigg[\frac{1}{(\!a_r\!-\!a_{L+1}\!)^S}\!\!\!\int\displaylimits_{-\infty}^\infty\!\!\frac{1}{(\omega_2^2\!-\!a_r)a_r}   d\omega_2 \notag \\ & \!\!\!+\!\!\sum_{k=1}^S\!\frac{(-1)^{S-k+1}}{(\!a_{L+1}\!-\!a_r\!)^{S-k+1}}\!\sum_{s=1}^k\!\frac{1}{(\!-a_{L+1}\!)^s}\!\binom{k}{s}\!\!\!\int\displaylimits_{-\infty}^\infty\!\!\!\frac{\omega_2^{2(s-1)}}{k(\omega_2^2\!\!-a_{L+1}\!)} d\omega_2\! \bigg]\!, \notag
\end{align}
where $\kappa_0 = \left[(2\pi^2f_\mathrm{d}^2)^{L+S}\prod_{i=1}^L\lambda_i^2\lambda_{L+1}^{2S}\right]^{-1}$, and $a_m = \frac{k\lambda_m\omega_1-1}{2\pi^2f_\mathrm{d}^2\lambda_m^2}$, where $m=1,2,...,L+1$. Applying Cauchy's residue theorem from \cite{jameson_first_1970} and substituting in $a_m$,
\begin{align}
    &J(\omega_1)\!=\!\!\sum_{r=1}^L\!\Bigg[\frac{-\kappa_0\pi jB_r}{\prod\limits_{i\neq r}\!(j\omega_1\!-\!A_{i,r})(j\omega_1\!-\!A_{L+1,r})^{S}(j\omega_1\!\!-\!\frac{1}{\lambda_r})^{3/2}}+ \notag \\
    &\!\sum_{k=1}^S\!\!\frac{-\kappa_0\pi jC_kD_{r,k}}{\prod\limits_{i\neq r}\!(j\omega_1\!\!-\!\!A_{i,r}\!)\!(j\omega_1\!\!-\!\!A_{L+1,r}\!)^{S\!-\!k\!+\!1}\!(j\omega_1\!\!-\!\frac{1}{\lambda_{L\!+\!1}})^{k\!+\!1/2}}\!\Bigg]\!,\!\!\!\!\label{eq:Jw1close}
\end{align}
where $B_r$, $C_k$ and $D_{r,k}$ are given in (\ref{eq:Br}), (\ref{eq:Ck}) and (\ref{eq:Drk}), and $A_{i,r} = \frac{\lambda_i+\lambda_r}{\lambda_i\lambda_r}$. Substituting (\ref{eq:Jw1close}) into (\ref{eq:LCRd}) gives (\ref{eq:LCRdfinal}), where
%\begin{equation}
  %  B_r = \frac{\left(\prod_{i=1}^L\lambda_i\right)\lambda_r^{L+S-1/2}\lambda^S_{L+1}\left(2\pi^2f_d^2\right)^{L+S+1/2}}{\prod_{i\neq r}(\lambda_i-\lambda_r)(\lambda_{L+1}-\lambda_r)^S}, \notag
%\end{equation}
%\begin{equation}
  %  C_k = \sum_{s=1}^{k}\frac{(-1)^k}{4^{k-1}}\binom{2s-2}{s-1}\binom{2k-2s}{k-s}\frac{k}{s}, \notag
%\end{equation}
%\begin{equation}
   % D_{r,k} = \frac{\left(\prod_{i=1}^L\lambda_i\right)\lambda_r^{L+S-k-1}\lambda^{S+3/2}_{L+1}\left(2\pi^2f_d^2\right)^{L+S+1/2}}{\prod_{i\neq r}(\lambda_i-\lambda_r)(\lambda_{L+1}-\lambda_r)^{S-k+1}}, \notag
%\end{equation}
\begin{align}
    I\!(r,\!p,\!s,\!k)\!&\!=\!\!\!\!\!\int\limits_{-\infty}^{\infty}\!\!\frac{\mathrm{e}^{-j\omega_1T}}{\!\prod\limits_{i\neq r}\!\!(j\omega_1\!-\!A_{i,r}\!)\!(j\omega_1\!-\!A_{L\!+\!1,r}\!)^p\!(j\omega_1\!-\!\frac{1}{\lambda_s})^{k\!+\!1\!/\!2}}d\omega_1\!.\! \label{eq:Iint}
\end{align}
Expanding (\ref{eq:Iint}) using partial fractions (2.102,\cite{gradshteyn_table_1980}) and using the integral result in (3.384.7,\cite{gradshteyn_table_1980}) gives (\ref{eq:I}). 
\vspace{-0.6em}
\section*{Appendix B \\ Derivation of $\omega^2$}
\vspace{-0.5em}
Writing $\mathbf{h}_{\mathrm{ur},k}(t+\tau) = Z_k(t+\tau)$ in terms of $\mathbf{h}_{\mathrm{ur},k}(t) = Z_k(t) = r_k\,\mathrm{e}^{j\theta_k}$ and differentiating by first principles gives

\begin{align}\label{deriv}
    \!\!|\dot{Z_k}(t)|\!=\!\lim_{\tau \to 0}\!\frac{|\rho_\mathrm{ur}(\tau) Z_k(t)\!+\!\sqrt{1\!-\!\rho_\mathrm{ur}(\tau)^2}e_k|\!-\!|Z_k(t)|}{\tau},\!\!\!
\end{align}
where $\rho_\mathrm{ur}(\tau) = J_0(2\pi f_\mathrm{ur}\tau)$ and $e_k\sim\mathcal{CN}(0,\beta_\mathrm{ur})$ is independent of $Z_k(t)$. Writing $\rho_\mathrm{ur}(\tau)$ in its series form, and expanding \eqref{deriv} keeping only the leading powers of $\tau$ gives
\begin{align}
    |\dot{Z_k}(t)| = \Re\left(\sqrt{2\pi^2f_\mathrm{ur}^2}\frac{e_k}{Z_k(t)}\right)|Z_k(t)|.
\end{align}
Following the same procedure for $Z_l(t)$ leads to
\begin{align}
    \omega^2 = 2\pi^2f_\mathrm{ur}^2\sum^N_{k=1}\sum^N_{l=1}\mathbb{E}\!\left[\Re\!\left(\!\frac{e_k|Z_k(t)|}{Z_k(t)}\!\right)\!\Re\!\left(\!\frac{e_l|Z_l(t)|}{Z_l(t)}\!\right)\!\right]\!. \label{eq:omega2penultimate}
\end{align}
As $e_k$, $e_l$ are independent of $Z_k(t)$, $Z_l(t)$, we can first take the expectation over these variables. Furthermore, in order to maintain spatial correlation, $\mathbb{E}[e_ke_l^*]=\beta_\mathrm{ur}\mathbf{R}_{\mathrm{ur},kl}$. Hence, 
\begin{align}\label{w2x}
    \omega^2 = \beta_\mathrm{ur}\pi^2f_\mathrm{ur}^2\sum^N_{k=1}\sum^N_{l=1}\mathbf{R}_{\mathrm{ur},kl}\mathbb{E}\left[\mathrm{e}^{j(\theta_k-\theta_l)}\right].
\end{align}
The expectation in \eqref{w2x} is given in (4.32,\cite{miller_complex_1974}), which can be substituted into (\ref{eq:omega2penultimate}) to give (\ref{eq:w2}).\vspace{-0.2em}
\bibliographystyle{IEEEtran}
\bibliography{IEEEabrv, references.bib}

\end{document}